\newtheorem{theorem}{Theorem}
\newtheorem{definition}[theorem]{Definition}
\newenvironment{proof}[1][Proof]{\textbf{#1.} }{\ \rule{0.5em}{0.5em}}
\numberwithin{equation}{section}
\begin{document}

\title{Energy estimate for initial data on a characteristic cone}
\author{
Yvonne Choquet-Bruhat \\ 
Acad\'emie des Sciences, Paris \\[3mm]
Jos\'e M. Mart\'\i n-Garc\'\i a \\
Institut d'Astrophysique de Paris, and \\
Laboratoire Univers et Th\'eories, Meudon}

\maketitle

\begin{abstract}
The Einstein equations in wave map gauge are a geometric second order system
for a Lorentzian metric. To study existence of solutions of this hyperbolic
quasi diagonal system with initial data on a characteristic cone which are
not zero in a neighbourhood of the vertex one can appeal to  theorems due to
Cagnac and Dossa, proved for a scalar wave equation, for initial data in
functional spaces relevant for their proofs. It is difficult to check that
the initial data that we have constructed as solutions of the Einstein
wave-map gauge constraints satisfy the more general of the Cagnac-Dossa
hypotheses which uses weighted energy estimates. In this paper we start a
new study of energy estimates using on the cone coordinates adapted to its
null structure which are precisely the coordinates  used to solve the
constraints, following work of Rendall who considered the Cauchy problem for
Einstein equations with data on two intersecting characteristic surfaces.
\end{abstract}

\newpage

\section{Introduction}

In recent work (see summary in \cite{CBChMG2010}) we have
considered the Cauchy problem for the Einstein equations with
data on a characteristic cone. We have used a wave-map gauge
with target a Minkowski metric which admits this cone as a null
cone and derived explicit formulae for the constraint equations
on initial data, that is the trace on the cone of the
looked for Lorentzian metric. These constraints were proved to
be necessary and sufficient conditions for a solution of the
Einstein equations in wave gauge taking these initial data to
be a solution of the original Einstein equations. We have
constructed solutions of the constraints which tend to
Minkowskian values at the vertex of the cone, but are not
necessarily identical to the trace of a Minkowski metric in a
neighbourhood of this vertex, as was assumed in a recent book
by Christodoulou \cite {Christodoulou2008} and a subsequent
paper by Klainerman and Rodnianski \cite
{KlainermanRodnianski2009}. The Einstein equations in wave map
gauge are a geometric second order system for a Lorentzian
metric. To study existence of solutions of this hyperbolic
quasi diagonal system with initial data on a characteristic
cone which are not zero in a neighbourhood of the vertex we
have appealed to a theorem due to Cagnac and Dossa, proved for
a scalar wave equation, for initial data in functional spaces
relevant for their proofs. However it is difficult to check
that the initial data that we have constructed as solutions of
the Einstein wave-map gauge constraints satisfy the more
general of the Cagnac-Dossa hypotheses which appeals to
weighted energy estimates.

In this paper we start a new study of energy estimates using on the cone
coordinates adapted to its null structure which are precisely the
coordinates we used to solve the constraints, inspired by work of Rendall
\cite{Rendall1990} and Damour and Schmidt \cite{DamourSchmidt1990} who
considered the Cauchy problem for Einstein equations with data on two
intersecting characteristic surfaces.

\section{Definitions and notations}

We consider a linear quasidiagonal second order system on a manifold $V$%
\begin{equation}
g^{\alpha\beta}D_{\alpha\beta}^{2}h=f
\label{2.1}
\end{equation}
where $g$ is a Lorentzian metric, $h$ and $f$ are sections of a vector
bundle $\mathcal{V}$ over $V$ (for example covariant symmetric 2 tensor
fields) and $D$ is a covariant derivative in a given metric
not necessarily equal to $g$.

We take for $V$ an open set of $R^{n+1}$ and denote by
$y^{\alpha},\alpha=0,i,$ with $i=1,...n,$ coordinates
admissible for the differential structure of $V.$ We consider a
cone $C_{O}$ of vertex $O\in V$ which has in the coordinates
$y^{\alpha}$ the same equation as the Minkowski cone in
standard coordinates
\begin{equation}
y^{0}=r,\text{ \ \ }r^{2}:=\sum_{i=1}^{n}(y^{i})^{2};
\label{2.2}
\end{equation}
we suppose $C_{O}$ to be a characteristic cone of the
Lorentzian metric $g:$ it is well known that the use of normal
geodesic coordinates centered at the vertex $O$ shows that the
choice (\ref{2.2}) is no restriction on $g$ and $C_{O}$ if $V$
is a small enough neighbourhood of $O.$ Cagnac and
Dossa use the same representation of a characteristic cone with
vertex $O.$ They
denote, as we will do, by $Y_{O}^{T}$ the future of $O$ limited by
$y^{0}\leq T,$ that is:
\begin{equation}
Y_{0}^{T}:=\{r\leq y^{0}\leq T\}\text{ \ and set \ \ }S_{t}:=\{r\leq
y^{0}=t\},\text{ \ \ }C_{0}^{T}:=\{r=y^{0}\leq T\};
\label{2.3}
\end{equation}
they take as coordinates on $C_{O}$ the $n$ variables $y^{i}.$
Of course the
cone is not diffeomorphic to $R^{n},$ being singular for
$\overrightarrow{y}:=(y^{1},...y^{n})=0.$

We define coordinates in $V,$ singular
at $O,$ adapted to the null structure of $C_{O},$ defined by
\begin{equation}
y^{0}=x^{1}-x^{0},\text{ \ }r=x^{1},\text{ \ }y^{i}=r\Theta^{i}(x^{A}),\text{
\ with \ \ }\sum_{i=1}^{n}(\Theta^{i})^{2}=1,
\label{2.4}
\end{equation}
$x^{A},A=2,...n$ local coordinates on the sphere $S^{n-1}.$
Components of geometric objects in $y$ coordinates are
underlined, components are in $x$ coordinates if not
underlined.

In the coordinates $x^{\alpha}$ the equation of $C_{O}$ is $x^{0}=0.$
Traces on the cone are overlined, $\bar{g}^{00}\equiv 0$.
The lines $x^{A}=$constant on $C_{O}$ are geodesic null rays,
hence $\bar{g}_{11}\equiv\bar{g}_{1A}\equiv 0;$ that is, the trace
on $C_{O}$ of the metric $g$ takes the form
\begin{equation}
\bar{g}=\bar{g}_{00}(dx^{0})^{2}+2\nu_{0}dx^{0}dx^{1}+2\nu_{A}dx^{0}dx^{A}
+\tilde{g},
\text{ \ \ }\tilde{g}:=\bar{g}_{AB}dx^{A}dx^{B}.
\label{2.5}
\end{equation}

\section{Stress energy tensor}

To have norms for tensors on $V\subset R^{n+1}$ we endow it with the
euclidean metric

\begin{equation}
\mathbf{e}\equiv(dy^{0})^{2}+\sum_{i=1,...n}(dy^{i})^{2},
\label{3.1}
\end{equation}
which reads in the $x^{\alpha}$ coordinates
\begin{equation*}
\mathbf{e}\equiv(dx^{0})^{2}-2dx^{1}dx^{0}+2(dx^{1})^{2}
+(x^{1})^{2}s_{AB}dx^{A}dx^{B}.
\end{equation*}
In the $x$ coordinates it holds that
\begin{equation}
\bar{e}^{00}=2,\text{ \ }
\bar{e}^{11}=1,\text{ \ }
\bar{e}^{01}=1,\text{ \ }
\bar{e}^{1A}=\bar{e}^{0A}=0,\text{ \ }
\bar{e}^{AB}\equiv(x^{1})^{-2}s^{AB}.
\label{3.2}
\end{equation}

We denote by $D$ the covariant derivative in the metric \textbf{e} on
$R^{n+1},$ it coincides with the covariant derivative in the Minkowski
metric $\eta,$
\begin{equation}
\eta\equiv-(dy^{0})^{2}+\sum_{i=1,...n}(dy^{i})^{2}
\equiv-(dx^{0})^{2}+2dx^{1}dx^{0}+(x^{1})^{2}s_{n-1},
\label{3.3}
\end{equation}
both these covariant derivatives coinciding with ordinary partial
derivatives in the $y$ coordinates.

Indices are raised with the contravariant associate of $g.$ We
denote by an underlined dot the pointwise scalar product
relative to \textbf{e.}

\begin{definition}
The stress energy tensor of a tensor $h$ is the symmetric 2-tensor:
\begin{equation}
U^{\alpha \beta }=D^{\alpha }h\underline{.}D^{\beta }h-{\frac{1}{2}}%
g^{\alpha \beta }D_{\lambda }h\underline{.}D^{\lambda }h.  \label{3.4}
\end{equation}
\end{definition}

We consider a past oriented timelike vector $X$. The energy momentum vector
is
\begin{equation}
\mathcal{P}^{\alpha}:=U^{\alpha\beta}X_{\beta}.  \label{3.5}
\end{equation}
The \textbf{e}-divergence of $\mathcal{P}$ is
\begin{equation}
D_{\alpha}\mathcal{P}^{\alpha}\equiv D_{\alpha}(U^{\alpha\beta}X_{\beta
})\equiv X_{\beta}D_{\alpha}U^{\alpha\beta}+U^{a\beta}D_{\alpha}X_{\beta }
\label{3.6}
\end{equation}
We have
\begin{equation}
D_{\alpha}U^{\alpha\beta}\equiv g^{\alpha\lambda}D_{\alpha\lambda}^{2}h%
\underline{.}D^{\beta}h+F^{\beta} .  \label{3.7}
\end{equation}
with
\begin{equation}
F^{\beta}\equiv D_{\alpha}g^{\alpha\lambda}D_{\lambda}h\underline{.}D^{\beta
}h+D^{\alpha}h\underline{.}D_{\alpha}D^{\beta}h-{\frac{1}{2}}D_{\alpha
}(g^{\alpha\beta}g^{\lambda\mu})D_{\lambda}h\underline{.}D_{\mu}h-g^{\alpha%
\beta}D_{\alpha}D_{\lambda}h\underline{.}D^{\lambda}h.
\end{equation}
Changing ordering and names of indices we find
\begin{equation}
F^{\beta}\equiv D^{\alpha}h\underline{.}(D_{\alpha}D^{\beta}h-D^{\beta
}D_{\alpha}h)+D_{\alpha}g^{\alpha\lambda}D_{\lambda}h\underline{.}D^{\beta
}h-{\frac{1}{2}}D_{\alpha}(g^{\alpha\beta}g^{\lambda\mu})D_{\lambda }h%
\underline{.}D_{\mu}h;
\end{equation}
the nullity of the Riemann tensor of the Minkowski metric implies
\begin{equation*}
(D_{\alpha}D^{\beta}h-D^{\beta}D_{\alpha}h\equiv g^{\beta\lambda}(D_{\alpha
}D_{\lambda}h-D_{\lambda}D_{\alpha}h)+D_{\alpha}g^{\beta\lambda}D_{\lambda
}h\equiv D_{\alpha}g^{\beta\lambda}D_{\lambda}h.
\end{equation*}
Finally we see that $F^{\beta}$ reduces to the following quadratic form in
the derivatives of $g$
\begin{equation}
F^{\beta}\equiv{\frac{1}{2}}D_{\alpha}(g^{\alpha\mu}g^{\beta\lambda}+g^{%
\beta\mu}g^{\alpha\lambda}-g^{\alpha\beta}g^{\lambda\mu})D_{\lambda }h%
\underline{.}D_{\mu}h.
\end{equation}

\section{Energy equality}

We assume that the contravariant associate of $g$ and $X$ are $C^{1}$ in
$V.$ Then $\mathcal{P}^{\alpha}\equiv U^{\alpha\beta}X_{\beta}\in C^{1}$
if $h\in C^{2}.$

When $h$ is solution of the system (\ref{2.1}) we deduce from (\ref{3.6})
the equality
\begin{equation}
D_{\alpha}\mathcal{P}^{\alpha}=X_{\beta}(D^{\beta}h\underline{.}f+F^{\beta
})+U^{a\beta}D_{\alpha}X_{\beta}.
\label{4.1}
\end{equation}

We denote by $\Omega_{\mathbf{e}}$ the $n+1$ volume form of $\mathbf{e};$
it reads in arbitrary coordinates $z^{\alpha}$
\begin{equation}
\Omega_{\mathbf{e}}=(\det \mathbf{e}_{z})^{\frac{1}{2}}dz^{0}\wedge
dz^{1}...\wedge dz^{n}\ .
\label{4.2}
\end{equation}
In the coordinates respectively $y^{\alpha }$ and $x^{\alpha }$ it holds
that
\begin{equation}
(\det \mathbf{e}_{y})^{\frac{1}{2}}\equiv 1,
\text{ \ \ \ \ }
(\det \mathbf{e}_{x})^{\frac{1}{2}}\equiv (x^{1})^{n-1}|\det s_{n-1}|^{\frac{1}{2}}.
\label{4.3}
\end{equation}
We recall the identity ($d$ denotes the exterior derivative and a dot the
contraction in the metric $g$)
\begin{equation}
D.\mathcal{P}\Omega_{\mathbf{e}}\equiv d(\mathcal{P}.\omega),  \label{4.4}
\end{equation}
where $\omega$ is the covariant vector valued Leray $n$ form whose
components are given in arbitrary coordinates $z^{\alpha}$ by
\begin{equation}
\omega_{\alpha}=(-1)^{\alpha}|\det \mathbf{e}_{z}|^{\frac{1}{2}%
}dz^{0}\wedge dz^{1}...\wedge d\hat{z}^{\alpha }\wedge ...\wedge dz^{n}.
\label{4.5}
\end{equation}
The notation $\hat{\alpha}$ means that the corresponding differential does
not appear in the component $\omega_{\alpha}$.

We choose for $X$ the past oriented vector with components in the $y$
coordinates (recall that we underline such components)
\begin{equation}
\underline{X}_{\beta}:=\delta_{\beta}^{0}.  \label{4.6}
\end{equation}

We integrate with respect to the volume form $\Omega_{\mathbf{e}}$ the
equality (\ref{4.1}) on $Y_{O}^{T}$ oriented by the natural orientation of
$R^{n}$ and increasing $t:=y^{0}.$ The result reads in the $y$ coordinates
\begin{equation}
\int_{Y_{O}^{T}}D.\mathcal{P}\Omega _{\mathbf{e}}=\int_{0}^{T}\int_{S_{t}}(%
\underline{D^{0}h}\underline{.}f+\underline{F}^{0})\mu _{e}dt.  \label{4.7}
\end{equation}

On the other hand, the following identity holds if the integral on its right
hand side exists,
\begin{equation}
\int_{Y_{O}^{T}}D.\mathcal{P}\Omega_{\mathbf{e}}\equiv\int_{\partial
Y_{O}^{T}}\mathcal{P}.\omega.  \label{4.8}
\end{equation}
We have, using the definitions 2.3
\begin{equation}
\partial Y_{O}^{T}\equiv S_{T}\cup C_{O}^{T}.  \label{4.9}
\end{equation}

\subsection{Integral on $S_{T}$}

We have
\begin{equation}
\int_{S_{T}}\mathcal{P}.\omega=\int_{r\leq T}\mathcal{\underline{\mathcal{P}}%
}^{0}(T,\overset{\rightarrow}{y})dy^{1}...dy^{n},\text{ \ \ }\overset{%
\rightarrow}{y}:=(y^{1},...y^{n}).  \label{4.10}
\end{equation}
With the choice we have made of $X,$ $\underline{\mathcal{P}}^{0}$ reads
\begin{equation}
\mathcal{\underline{\mathcal{P}}}^{0}\equiv\delta_{\beta}^{0}\underline
{U}^{0\beta}=\underline{D^{0}h}\underline{.}\underline{D^{0}h}-{\frac{1}{2}}%
\underline{g^{00}}D_{\lambda}h\underline{.}D^{\lambda}h,  \label{4.11}
\end{equation}
i.e.
\begin{equation}
\underline{\mathcal{P}}^{0}\equiv\underline{D^{0}h}\underline{.}\underline{%
D^{0}h}-{\frac{1}{2}\underline{g^{00}}} \left(\underline{g_{00}}\underline{%
D^{0}h}\underline{.}\underline{D^{0}h}+ 2\underline{g}_{0i}\underline{D^{0}h}%
\underline{.}\underline{D^{i}h}+ \underline{g}_{ij}\underline{D^{i}h}%
\underline{.}\underline{D^{j}h} \right).  \label{4.12}
\end{equation}
It is a positive definite quadratic form of $\underline{Dh}$ if $g$ is a
Lorentzian metric regularly sliced
(see \cite[appendix 7]{ChoquetBruhat2008}) by $S_{t}.$

\subsection{Integral on $C_{O}^{T}$}

We write the integral on $C_{0}^{T}$ in the $x^{\alpha }$ coordinates.
Recalling that $\mathcal{\bar{P}}^{0}$ denotes the value on $C_{O}$ of the
component with index zero in the $x^{\alpha }$ coordinates of the vector $%
\mathcal{P}$, we find
\begin{equation}
\int_{C_{O}^{T}}\mathcal{P}.\omega \equiv \int_{0}^{T}\int_{S^{n-1}}\mathcal{%
\bar{P}}^{0}(x^{1})^{n-1}|\det s_{n-1}|^{\frac{1}{2}}dx^{2}...dx^{n}dx^{1}.
\label{4.13}
\end{equation}
The components of $X$ in the coordinates $x^{\alpha }$ are
\begin{equation}
X_{\alpha }:=\underline{X}_{\beta }\frac{\partial y^{\beta }}{\partial
x^{\alpha }},\text{ \ i.e. \ \ }X_{0}=-1,\text{ \ }X_{1}=1,
\text{ \ }X_{A}=0 .
\label{4.14}
\end{equation}
Hence on $C_{O}^{T}$ it holds that
\begin{equation}
\mathcal{\bar{P}}^{0}(x^{1},x^{A})\equiv -\bar{U}^{00}(x^{1},x^{A})+\bar{U}%
^{01}(x^{1},x^{A}),  \label{4.15}
\end{equation}
where, using previous notations and recalling that
\begin{equation*}
\bar{g}^{00}=\bar{g}^{0A}=0,\ \ \nu ^{0}:=\bar{g}^{01}=\frac{1}{\nu _{0}},\
\ \bar{g}^{A1}\equiv -\nu ^{0}\nu ^{A},\ \ \bar{g}^{11}\equiv -(\nu ^{0})^{2}%
\bar{g}_{00}+(\nu ^{0})^{2}\nu ^{A}\nu _{A},
\end{equation*}
\begin{equation}
\bar{U}^{00}\equiv (\nu ^{0})^{2}\bar{U}_{11},\text{ \ }\bar{U}^{01}\equiv
\nu ^{0}(\nu ^{0}\bar{U}_{01}-\nu ^{0}\nu ^{A}\bar{U}_{1A}+\bar{g}^{11}%
\bar{U}_{11}),  \label{4.16}
\end{equation}
hence
\begin{equation}
\mathcal{\bar{P}}^{0}\equiv \nu ^{0}\{(-\nu ^{0}+\bar{g}^{11})\bar{U}%
_{11}+\nu ^{0}\bar{U}_{01}-\nu ^{0}\nu ^{A}\bar{U}_{1A}\},  \label{4.17}
\end{equation}
with, since $\bar{g}_{11}=\bar{g}_{1A}=0$,
\begin{equation}
\bar{U}_{1A}\equiv D_{A}\bar{h}\underline{.}D_{1}\bar{h},\text{ \ \ }\bar{U}%
_{11}\equiv D_{1}\bar{h}\underline{.}D_{1}\bar{h}  \label{4.18}
\end{equation}
and
\begin{equation}
\bar{U}_{01}\equiv \overline{D_{0}h}\underline{.}D_{1}\bar{h}-\frac{1}{2}\nu
_{0}\,\overline{D_{\lambda }h\underline{.}D^{\lambda }h}.  \label{4.19}
\end{equation}
We have
\begin{equation*}
\overline{D_{\lambda }h\underline{.}D^{\lambda }h}\equiv 2\nu ^{0}(\overline{%
D_{0}h}\underline{.}D_{1}\bar{h}-\nu ^{A}D_{A}\bar{h}\underline{.}D_{1}\bar{h%
})+\bar{g}^{11}D_{1}\bar{h}\underline{.}D_{1}\bar{h}+\bar{g}^{AB}D_{A}\bar{h}%
\underline{.}D_{B}\bar{h},
\end{equation*}
hence the transversal derivative $\overline{D_{0}h}$ disappears in $\bar{U}%
_{01}$ which reads
\begin{equation}
\bar{U}_{01}\equiv \nu ^{A}D_{A}\bar{h}\underline{.}D_{1}\bar{h}-\frac{1}{2}%
\nu _{0}(\bar{g}^{11}D_{1}\bar{h}\underline{.}D_{1}\bar{h}+\bar{g}^{AB}D_{A}%
\bar{h}\underline{.}D_{B}\bar{h});  \label{4.20}
\end{equation}
$\mathcal{\bar{P}}^{0}$ simplifies to the quadratic form
\begin{eqnarray}
\mathcal{\bar{P}}^{0} &\equiv &\nu ^{0}\big\{(-\nu ^{0}+\bar{g}^{11})D_{1}%
\bar{h}\underline{.}D_{1}\bar{h}+\nu ^{0}\nu ^{A}D_{A}\bar{h}\underline{.}%
D_{1}\bar{h}  \label{4.22} \\
&&\qquad -\frac{1}{2}(\bar{g}^{11}D_{1}\bar{h}\underline{.}D_{1}\bar{h}+\bar{%
g}^{AB}D_{A}\bar{h}\underline{.}D_{B}\bar{h})-\nu ^{0}\nu ^{A}D_{A}\bar{h}%
\underline{.}D_{1}\bar{h}\big\},  \notag
\end{eqnarray}
which simplifies to
\begin{equation}
\mathcal{\bar{P}}^{0}\equiv -\big\{\nu ^{0}(\nu ^{0}-\frac{1}{2}\bar{g}%
^{11})D_{1}\bar{h}\underline{.}D_{1}\bar{h}+\frac{1}{2}\bar{g}^{AB}D_{A}\bar{%
h}\underline{.}D_{B}\bar{h}\big\}.  \label{4.23}
\end{equation}
We remark using the values (\ref{4.14}) of $\bar{X}_{\alpha}$ that on
$C_{O}$
\begin{equation}
\bar{g}^{\alpha \beta }\bar{X}_{\alpha }\bar{X}_{\beta }\equiv -2\nu ^{0}+%
\bar{g}^{11}  \label{4.24}
\end{equation}
which is negative if $\bar{X}$ is timelike. Hence $\mathcal{\bar{P}}^{0}\leq
0,$ as foreseen from the general theory since the boundary $C_{O}^{T}$ of $%
Y_{O}^{T}$ is null and outgoing. (See \cite[appendix 7]{ChoquetBruhat2008}.)

\subsection{Energy equality}

We have proved, under the indicated condition, the following theorem.

\begin{theorem}
If the metric $g$ is $C^{1}$ a $C^{2}$ solution of the equation (\ref{2.1})
satisfies the equality
\begin{equation*}
\int_{S_{T}}\underline{\mathcal{P}}^{0}(T,\overset{\rightarrow}{y}%
)dy^{1}...dy^{n}=-\int_{C_{O}^{T}}\mathcal{\bar{P}}%
^{0}(x^{1},x^{A})(x^{1})^{n-1}dx^{1}\mu_{S^{n-1}}+
\end{equation*}
\begin{equation}
\int_{0}^{T}\int_{S_{t}}(\underline{D^{0}h}\underline{.}f+\underline{F}%
^{0})(t,y^{1}...,y^{n})dy^{1}...dy^{n}dt .  \label{4.25}
\end{equation}
\end{theorem}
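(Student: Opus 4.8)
The plan is to obtain the stated identity by integrating the pointwise energy balance (\ref{4.1}) over the truncated future cone $Y_O^T$ and converting the divergence term into boundary integrals by Stokes' theorem. In fact all the analytic content has already been assembled in the preceding subsections, so the proof is essentially a bookkeeping of the pieces (\ref{4.7}), (\ref{4.8}), (\ref{4.9}), (\ref{4.10}) and (\ref{4.13}), together with a careful treatment of the singular vertex.

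First I would record that, since the contravariant associate of $g$ and the vector $X$ are $C^1$ and $h$ is $C^2$, the energy momentum vector $\mathcal{P}$ is $C^1$ on $V$; hence the Leray identity (\ref{4.4}) holds classically and $D.\mathcal{P}\,\Omega_{\mathbf{e}}$ is a continuous $(n+1)$-form. Multiplying the energy equality (\ref{4.1}), which is valid because $h$ solves (\ref{2.1}), by $\Omega_{\mathbf{e}}$ and integrating over $Y_O^T$ with the orientation fixed above (natural orientation of $R^n$, increasing $t=y^0$) produces on the right-hand side exactly the bulk integral (\ref{4.7}). It then remains only to evaluate the left-hand side $\int_{Y_O^T}D.\mathcal{P}\,\Omega_{\mathbf{e}}$ as a boundary term.

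The core step is to apply Stokes' theorem in the form (\ref{4.8}) and to decompose the boundary via (\ref{4.9}) into the top slice $S_T$ and the outgoing null piece $C_O^T$. The contribution of $S_T$, carrying the induced orientation of increasing $t$, is evaluated in the $y$ coordinates by (\ref{4.10}) as $\int_{r\le T}\underline{\mathcal{P}}^0(T,\vec{y})\,dy^1\cdots dy^n$, while the contribution of $C_O^T$, written in the null-adapted coordinates $x^\alpha$, is evaluated by (\ref{4.13}). Substituting these two expressions into (\ref{4.8}), equating with (\ref{4.7}), and transposing the cone term to the other side yields precisely (\ref{4.25}); the minus sign in front of the $C_O^T$ integral is exactly the sign produced by carrying the $C_O^T$ contribution across the equality, so it simply encodes the induced boundary orientation of the outgoing null hypersurface.

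The hard part will be justifying the use of Stokes' theorem on $Y_O^T$, which is not a smooth domain: the coordinates $x^\alpha$ degenerate at the vertex $O$, the cone $C_O^T$ is characteristic (null) rather than spacelike, and $S_T$ meets $C_O^T$ along the sphere $r=y^0=T$. I would handle this by exhaustion, excising a small Euclidean ball $B_\varepsilon(O)$, applying the classical divergence theorem on the smooth region $Y_O^T\setminus B_\varepsilon(O)$, and then letting $\varepsilon\to0$. The contribution of the small cap $\partial B_\varepsilon(O)\cap Y_O^T$ is controlled by the $C^1$ bound on $\mathcal{P}$ together with the degeneration of the $\mathbf{e}$-volume element near $O$, visible in (\ref{4.3}) through the factor $(x^1)^{n-1}$, so that it tends to zero, while the remaining boundary integrals converge to those over $S_T$ and $C_O^T$ under the standing assumption, stated before (\ref{4.8}), that the right-hand boundary integral exists. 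This limiting argument, rather than the algebra of (\ref{4.15})--(\ref{4.23}) already carried out above, is where the genuine care is required.
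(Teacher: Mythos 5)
Your proposal is correct and takes essentially the same approach as the paper: the paper's proof of this theorem is precisely the assembly you describe, namely integrating (\ref{4.1}) to get (\ref{4.7}), applying the identity (\ref{4.8}) with the boundary decomposition (\ref{4.9}), and evaluating the two boundary contributions by (\ref{4.10}) and (\ref{4.13}) before transposing the cone term. Your vertex-excision argument makes explicit the justification of Stokes' theorem near the singular point $O$, which the paper leaves implicit in its proviso that the boundary integral exists; this is a refinement of, not a departure from, the paper's route.
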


\section{Energy inequality}

The hypothesis that the Lorentzian metric $g$ is regularly sliced on
$Y_{O}^{T}$ implies that there exist numbers $C_{m}>0$ and $C_{M}\geq C_{m}$
such that, with $\overrightarrow{y}:=(y^{i},i=1,...n),$
\begin{equation}
C_{m}\varepsilon(t,\overrightarrow{y})\leq\mathcal{\underline{\mathcal{P}}}%
^{0}(t,\overrightarrow{y})\leq C_{M}\varepsilon(t,\overrightarrow {y}),
\label{5.1}
\end{equation}
with
\begin{equation}
\varepsilon(t,\overrightarrow{y})\equiv\{\frac{\partial h}{\partial t}%
\underline{.}\frac{\partial h}{\partial t}+\delta^{ij}\frac{\partial h}{%
\partial y^{i}}\underline{.}\frac{\partial h}{\partial y^{j}}\}(t,%
\overrightarrow{y}).  \label{5.2}
\end{equation}
We set
\begin{equation}
E(t)\equiv\int_{0\leq r\leq t}\varepsilon_{t}(t,\overrightarrow{y}%
)dy^{1}...dy^{n},\text{ \ }r:=\{\Sigma(y^{i})^{2}\}^{\frac{1}{2}}.
\label{5.3}
\end{equation}

We denote generically by $C$ a number depending only on $n$ and
the uniform slicing hypotheses, i.e. $C_{m}$ and $C_{M}.$ We
have
\begin{equation}
E(t)\leq C\int_{S_{t}}\mathcal{\underline{\mathcal{P}}}^{0}dy^{1}...dy^{n}.
\label{5.4}
\end{equation}

We assume that there exists a continuous function, $C_{Dg}(t),$ of
$t\in\lbrack0,T]$ \ such that
\begin{equation}
\sup_{S_{t}}|Dg|\leq C_{Dg}(t).  \label{5.5}
\end{equation}
We denote by $C_{|Dg|}$ any number depending only on the
uniform slicing bounds of $g$ and the supremum of $C_{Dg}(t)$\
for\ $0\leq t\leq T.$
\begin{theorem}
(energy inequality) If $g$ is $C^{1}$ and uniformly sliced on $Y_{O}^{T}$
any $C^{2}$ solution of the equation (\ref{2.1}) satisfies an inequality
\begin{equation*}
E(T)\leq Ce^{C_{|Dg|}T}\int_{0}^{T}\Big(||f||_{L^{2}(S_{t})}^{2}+t^{n-1}%
\int_{S^{n-1}}|\mathcal{\bar{P}}^{0}(t,x^{A})|\mu_{S^{n-1}}\Big)dt.
\end{equation*}
\end{theorem}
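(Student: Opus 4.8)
The plan is to start from the energy equality (\ref{4.25}) established in the previous theorem and convert it into a differential inequality for $E(t)$ via Gronwall's lemma. First I would use the definition (\ref{5.3}) of $E(t)$ together with the equivalence (\ref{5.4}), which bounds $E(t)$ by a constant times the integral of $\underline{\mathcal{P}}^{0}$ over $S_{t}$; the left-hand side of (\ref{4.25}) is exactly this latter integral evaluated at time $T$. Thus the energy equality reads, up to the constant $C$,
\begin{equation*}
E(T)\leq C\Big(-\int_{C_{O}^{T}}\mathcal{\bar{P}}^{0}(x^{1})^{n-1}dx^{1}\mu_{S^{n-1}}
+\int_{0}^{T}\int_{S_{t}}(\underline{D^{0}h}\,\underline{.}\,f+\underline{F}^{0})\,dy\,dt\Big).
\end{equation*}
Since (\ref{4.23})--(\ref{4.24}) give $\mathcal{\bar{P}}^{0}\leq 0$, the cone term $-\int_{C_{O}^{T}}\mathcal{\bar{P}}^{0}(\cdot)$ is nonnegative and may be bounded above by $\int_{0}^{T}t^{n-1}\int_{S^{n-1}}|\mathcal{\bar{P}}^{0}|\,\mu_{S^{n-1}}\,dt$, which is precisely the cone contribution appearing in the target inequality.

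Next I would estimate the two interior terms. For the source term, the pointwise Cauchy--Schwarz bound $|\underline{D^{0}h}\,\underline{.}\,f|\le |\underline{D^{0}h}|\,|f|\le \tfrac{1}{2}(|\underline{D^{0}h}|^{2}+|f|^{2})$ together with the slicing equivalence (\ref{5.1})--(\ref{5.2}) controls $|\underline{D^{0}h}|^{2}$ by $C\,\varepsilon(t,\overrightarrow{y})$, whose $S_{t}$-integral is (up to a constant) $E(t)$; the remaining piece integrates to $\|f\|_{L^{2}(S_{t})}^{2}$. For the term $\underline{F}^{0}$, I would use the fact established at the end of Section 3 that $F^{\beta}$ is a quadratic form in $Dh$ with coefficients linear in the first derivatives of $g$, so that pointwise $|\underline{F}^{0}|\le C\,|Dg|\,|Dh|^{2}\le C_{|Dg|}\,\varepsilon(t,\overrightarrow{y})$ using (\ref{5.5}); its $S_{t}$-integral is therefore bounded by $C_{|Dg|}E(t)$. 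Collecting these gives an integral inequality of the form
\begin{equation*}
E(T)\leq C\int_{0}^{T}\Big(\|f\|_{L^{2}(S_{t})}^{2}+t^{n-1}\int_{S^{n-1}}|\mathcal{\bar{P}}^{0}|\,\mu_{S^{n-1}}\Big)dt
+C_{|Dg|}\int_{0}^{T}E(t)\,dt.
\end{equation*}

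Finally I would apply Gronwall's lemma. Writing $G(T)$ for the first (data) integral, which is nondecreasing in $T$, the inequality $E(T)\le G(T)+C_{|Dg|}\int_{0}^{T}E(t)\,dt$ yields $E(T)\le G(T)\,e^{C_{|Dg|}T}$, which is the claimed bound after absorbing the constant $C$ into the statement's leading factor. The main obstacle I anticipate is not Gronwall itself but the careful bookkeeping needed to pass from the $\underline{\mathcal{P}}^{0}$ integral on the left of (\ref{4.25}) to $E(T)$ and to absorb all changes of variables and Jacobians (the factor $(\det \mathbf{e}_{x})^{1/2}=(x^{1})^{n-1}|\det s_{n-1}|^{1/2}$ from (\ref{4.3}), which accounts for the $t^{n-1}$ weight on the cone) into the generic constants $C$ and $C_{|Dg|}$ in a way that respects the uniform slicing hypothesis. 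One must also confirm that the cone integral is genuinely controlled by $|\mathcal{\bar{P}}^{0}|$ alone, which is exactly guaranteed by the sign $\mathcal{\bar{P}}^{0}\le 0$ from (\ref{4.23}), so that dropping it on one side and retaining its absolute value as data on the other is legitimate.
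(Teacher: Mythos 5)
Your proposal is correct and follows essentially the same route as the paper: the pointwise bound $|\underline{D^{0}h}\,\underline{.}\,f+\underline{F}^{0}|\leq C\varepsilon^{1/2}|f|+C_{|Dg|}\varepsilon$ is the paper's (\ref{5.6}), your treatment of the cone term via the sign $\mathcal{\bar{P}}^{0}\leq 0$ is the paper's $\Phi_{t}\geq 0$ in (\ref{5.7})--(\ref{5.8}), and your integral inequality is exactly (\ref{5.9}), concluded by Gronwall. The only cosmetic difference is that you invoke the standard Gronwall form with a nondecreasing data term $G(T)$, whereas the paper compares $E$ with the explicit solution $Z(t)$ of the associated linear ODE with $Z(0)=0$; both give the same exponential bound.
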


\begin{proof}
We deduce from (\ref{5.1}),(\ref{5.2}) that we have on $S_{t}$%
\begin{equation}
|D^{0}u\underline{.}f+\underline{F}^{0}|\leq C\varepsilon(t)^{\frac{1}{2}%
}|f|+C_{|Dg|}\varepsilon(t).  \label{5.6}
\end{equation}
On the other hand we have
\begin{equation}
-\int_{C_{O}^{T}}\mathcal{\bar{P}}^{0}(x^{1},x^{A})(x^{1})^{n-1}dx^{1}%
\mu_{S^{n-1}}\equiv\int_{0}^{T}\Phi_{t}dt,  \label{5.7}
\end{equation}
with (recall that $x^{1}=t$ on $C_{O})$
\begin{equation}
\Phi_{t}:=-t^{n-1}\int_{S^{n-1}}\mathcal{\bar{P}}^{0}(t,x^{A})\mu_{S^{n-1}}%
\geq0.  \label{5.8}
\end{equation}
The equality (\ref{5.7}) implies the inequality
\begin{equation}
E(T)\leq\int_{0}^{T}\{C_{|Dg|}E(t)\}dt+C%
\int_{0}^{T}(||f||_{L^{2}(S_{t})}^{2}+\Phi_{t})dt,  \label{5.9}
\end{equation}
with
\begin{equation*}
||f||_{L^{2}(S_{t})}^{2}:=\int_{0\leq r\leq t}|f(t,\overrightarrow{y}%
)|^{2}dy^{1}...dy^{n}.
\end{equation*}

By the Gronwall lemma the inequality (\ref{5.9}) verified by a $C^{2}$
solution of (\ref{2.1}) such that $E(0)=0$ implies that
\begin{equation*}
E(t)\leq Z(t),
\end{equation*}
with $Z(t)$ solution of the differential equation
\begin{equation}
Z^{\prime}(t)=C_{|Dg|}Z(t)+C(||f||_{L^{2}(S_{t})}^{2}+\Phi_{t}\text{ })
\text{ \ \ with \ \ }Z(0)=0.
\label{2.10}
\end{equation}
We look for a solution of (\ref{2.10}) vanishing for $t=0$ under the form
$Z(t)=ke^{C_{|Dg|}t},$ we find
\begin{equation*}
k^{\prime}e^{C_{|Dg|}t}=C(||f||_{L^{2}(S_{t})}^{2}+\Phi_{t}),
\end{equation*}
\begin{equation*}
Z(t)\equiv e^{C_{|Dg|}t}
\int_{0}^{t}e^{-C_{|Dg|}t}C(||f||_{L^{2}(S_{t})}^{2}+\Phi_{t})dt.
\end{equation*}
\end{proof}

\subsection{Uniqueness theorem}

A uniqueness theorem for the linear equation (\ref{2.1}) results immediately
from the inequality (\ref{5.9}) which implies $E(t)\equiv0$ if
$f\equiv\Phi_{t}=0$. We state:

\begin{theorem}
Two $C^{2}$ solutions of the equation (\ref{2.1}) in $Y_{O}^{T}$ with $g$ a
$C^{1}$ Lorentzian metric uniformly sliced coincide in $Y_{O}^{T}$
if they have the same trace on $C_{O}^{T}.$
\end{theorem}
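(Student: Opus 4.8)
The plan is to derive uniqueness directly from the energy inequality machinery already established, specializing to the difference of two solutions. Suppose $h_1$ and $h_2$ are two $C^2$ solutions of (\ref{2.1}) in $Y_O^T$ sharing the same trace on $C_O^T$, and set $h:=h_1-h_2$. Since (\ref{2.1}) is linear, $h$ is itself a $C^2$ solution of the homogeneous equation $g^{\alpha\beta}D^2_{\alpha\beta}h=0$, i.e. the source term satisfies $f\equiv 0$. I would then observe that because $h_1$ and $h_2$ agree on $C_O^T$, their difference has vanishing trace $\bar h\equiv 0$ there, so all the quantities $D_1\bar h$, $D_A\bar h$ appearing in the boundary expression (\ref{4.23}) for $\bar{\mathcal P}^0$ vanish, giving $\Phi_t\equiv 0$ in the notation of (\ref{5.8}).

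With $f\equiv 0$ and $\Phi_t\equiv 0$, the integral inequality (\ref{5.9}) collapses to
\begin{equation*}
E(T)\leq\int_0^T C_{|Dg|}E(t)\,dt.
\end{equation*}
The key step is to invoke Gronwall's lemma exactly as in the previous theorem: since $E(0)=0$ (the solution $h$ being $C^2$ up to the vertex, with vanishing Cauchy data in the appropriate sense), the unique solution $Z(t)$ of $Z'=C_{|Dg|}Z$ with $Z(0)=0$ is $Z\equiv 0$, whence $E(t)\leq Z(t)\equiv 0$ for all $t\in[0,T]$.

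Finally I would translate $E(t)\equiv 0$ back into pointwise vanishing of the derivatives of $h$. By definition (\ref{5.3})--(\ref{5.2}), $E(t)=0$ forces the positive-definite integrand $\varepsilon(t,\overrightarrow y)$ to vanish almost everywhere on each slice $S_t$, hence (by continuity, using $h\in C^2$) all first derivatives $\partial h/\partial t$ and $\partial h/\partial y^i$ vanish identically on $Y_O^T$. Together with the vanishing trace on the cone this gives $h\equiv 0$, i.e. $h_1\equiv h_2$ throughout $Y_O^T$.

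The main obstacle to watch is the boundary and initial bookkeeping rather than the Gronwall step, which is routine. One must be sure that equality of traces on $C_O^T$ really does kill \emph{all} the intrinsic derivatives entering $\bar{\mathcal P}^0$ in (\ref{4.23}) — note that (\ref{4.20}) shows the transversal derivative $\overline{D_0 h}$ drops out of $\bar U_{01}$, so only the tangential data $D_1\bar h$ and $D_A\bar h$ matter, and these are determined by the common trace. One must also justify $E(0)=0$ from the regularity of $h$ near the singular vertex $O$; this is the point where the adapted coordinates and the uniform-slicing hypothesis on $g$ are genuinely used, and it is the only place where care is needed.
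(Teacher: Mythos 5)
Your proposal is correct and follows essentially the same route as the paper, which disposes of uniqueness in one line: inequality (\ref{5.9}) applied to the difference $h=h_{1}-h_{2}$ with $f\equiv 0$ and $\Phi _{t}\equiv 0$ forces $E(t)\equiv 0$ via Gronwall. Your added bookkeeping is sound and even slightly more careful than the paper's (your worry about $E(0)=0$ is harmless: $S_{0}$ is the single point $O$, of measure zero, so $E(0)=0$ is automatic for a $C^{2}$ solution).
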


\section{Open problems}

The energy inequality can very likely be extended to tensors which are in
spaces obtained by completion of $C^{2}$ using norms which appear in this
inequality.

One could perhaps, using the energy inequality and some functional analysis,
prove an existence theorem for a generalized solution of the linear system,
as one does for a linear system with spacelike Cauchy data, though one
should probably for such a proof use a double null foliation, like in
Klainerman and Nicolo \cite{KlainermanNicolo2002}.

Anyway one needs higher order estimates to have results in the
case of quasilinear equations. In the case of a cone as support
of the initial data a problem for the use of standard embedding
and multiplication properties of Sobolev spaces is that the
sections $S_{t}$ cannot be considered as Riemannian manifolds
with equivalent Sobolev constants when $t$ tends to zero, the
vertex of the cone. A remedy proposed by Dossa in the case of a
scalar equation is to use the $y^{i}$ as coordinates on the
cone and to scale $\overrightarrow{y}$ by powers of $t^{-1}$ in
order to work in a fixed sphere of $R^{n}.$ We postpone the
application of this idea to a further work.


\begin{thebibliography}{99}
\bibitem{Cagnac1981} F.~Cagnac, Annali di Matematica \textbf{129} (1981),
13--41.

\bibitem{ChoquetBruhat2008} Y.~Choquet-Bruhat, \emph{General relativity
and the Einstein equations}, Oxford University Press, 2008.

\bibitem{CBChMG2010} Y.~Choquet-Bruhat, P.T. Chru\'{s}ciel, and J.M.
Mart\'{i}n-Garc\'{i}a, to appear World Scientific, arXiv: 1002.1471v1 [gr-qc].

\bibitem{Christodoulou2008} D.~Christodoulou, \emph{The formation of black
holes in general relativity}, EMS, 2008.

\bibitem{DamourSchmidt1990} T.~Damour and B.~Schmidt, Jour.\ Math.\ Phys.
\textbf{31} (1990), 2441--2453.

\bibitem{Dossa1997} M.~Dossa, Ann. Inst. H. Poincar\'{e} Phys. Th\'{e}or.
(1997), 37--107.

\bibitem{KlainermanNicolo2002} S. Klainerman and F. Nicolo
\emph{The evolution problem in General Relativity}, Birkha\"{u}ser 2002.

\bibitem{KlainermanRodnianski2009} S. Klainerman and I.\ Rodnianski,
arXiv: 0912.5097v1 [gr-qc].

\bibitem{Leray1953} J.~Leray, \emph{Hyperbolic differential equations},
notes, Princeton (1953).

\bibitem{Rendall1990} A.D. Rendall, Proc.\ Roy.\ Soc.\ London A \textbf{427}
(1990), 221--239.
\end{thebibliography}
\end{document}